\def\fskip#1{}
\newtheorem{theorem}{Theorem}
\newtheorem{lemma}{Lemma}
\newtheorem{proposition}[theorem]{Proposition}
\newtheorem{remark}{Remark}
\def\1{{\bf 1}}
\newcommand{\remove}[1]{}
\def\argmin{\mathop{\rm argmin}}
\begin{document}
\title{Distributed Computation for the Non-metric Data Placement Problem using Glauber Dynamics and Auctions}
\author{\authorblockN{S. Rasoul Etesami*\vspace{-0.5cm}}
\thanks{Department of Industrial and Systems Engineering and Coordinated Science Laboratory, University of Illinois Urbana-Champaign,  Urbana, IL 61801 (Email: etesami1@illinois.edu).  This work is supported by the NSF CAREER Award under Grant No. EPCN-1944403.}
}
\maketitle
\begin{abstract}
We consider the non-metric data placement problem and develop distributed algorithms for computing or approximating its optimal integral solution. We first show that the non-metric data placement problem is inapproximable up to a logarithmic factor. We then provide a game-theoretic decomposition of the objective function and show that natural Glauber dynamics in which players update their resources with probability proportional to the utility they receive from caching those resources will converge to an optimal global solution for a sufficiently large noise parameter. In particular, we establish the polynomial mixing time of the Glauber dynamics for a certain range of noise parameters. Finally, we provide another auction-based distributed algorithm, which allows us to approximate the optimal global solution with a performance guarantee that depends on the ratio of the revenue vs. social welfare obtained from the underlying auction. Our results provide the first distributed computation algorithms for the non-metric data placement problem.   
\end{abstract}
\begin{keywords}
Data placement; Glauber dynamics; potential games; approximation algorithms; distributed computation; linear programming duality.
\end{keywords}

\section{Introduction}

Data placement is one of the fundamental resource allocation problems in storage-capable distributed systems, such as content delivery networks, peer-to-peer networks, and mobile networks, for improving system availability, reliability, and fault tolerance \cite{guo2016algorithmic}. The data placement problem looks at how to store copies of different data (resources) among a set of capacity-constrained servers (agents) to minimize the overall resource placement and access costs. The placement cost captures the cost of allocating a particular resource to an agent (e.g., due to data compatibility with a server). On the other hand, the access costs measure the cost of getting access to data points across servers (e.g., due to delay or shipping costs). However, most work on data placement problems assumes that all agents fully comply with the centralized designed protocols. Nevertheless, in real-world data replication applications, entities, e.g., servers, data providers, or data consumers, can belong to different stakeholders or administrative domains with different preferences and objectives \cite{guo2016algorithmic,gopalakrishnan2012cache,etesami2020complexity,etesami2017price}. Therefore, our main goal in this work is to analyze the data placement problem from a distributed computation perspective and without any metric assumption on access costs.

\subsection{Related Work}
The data placement problem has been extensively studied in the past literature. The optimal data placement on networks with a constant number of clients and arbitrary access costs was considered in \cite{angel2014optimal}, where a polynomial algorithm for computing the optimal allocation for uniform resource length was developed. The data placement problem was studied from a mechanism design perspective in \cite{guo2016algorithmic}. There have been several efforts to obtain constant factor approximation algorithms for the \emph{metric} data placement problem, starting with \cite{baev2001approximation} and improved by \cite{baev2008approximation,swamy2016improved,krishnaswamy2018constant}. The main idea behind most of these approximation algorithms is based on solving a natural LP relaxation of the problem and then rounding the solution using refined clustering, network flow, or iterative rounding. A generalization of these results to the so-called ``matroid median problem" has been studied in \cite{swamy2016improved,krishnaswamy2018constant}; as a special case, it results in improved approximation algorithms for the metric data placement problem. We refer to \cite{ansari2017large,thakral2017approximation} for other heuristic approximation algorithms with or without theoretical performance guarantees. It was shown in \cite{etesami2020complexity} that in the case of homogeneous metric data placement when agents have identical request rates, a simple greedy algorithm could achieve an approximation factor of 3, hence substantially improving the existing approximation factors that were given for the heterogeneous setting. In particular, it was shown that the same algorithm results in a 3-competitive algorithm for the online version of the problem in which agents arrive adversarially over time and reveal their specifications. A different online variant of the data placement problem has also been studied in \cite{drwalcompetitive}.

The data placement problem is also closely related to the uncapacitated facility location problem (UFLP) \cite{jain2003greedy} and its variants, such as the $k$-median and matroid median problems \cite{charikar1999constant,deng2022constant}, in which the goal is to open a subset of facilities and assign each client to an open facility in order to minimize the total facility opening costs and clients' access costs. In fact, as we will show, the data placement problem is a more complicated version of the UFLP in which there are multiple facility types that are coupled through the cache constraints. In particular, one can show that by relaxing the cache constraints in the data placement problem using Lagrangian multipliers, the data placement problem can be decomposed into a sum of separable UFLPs. A heuristic approximation algorithm based on decomposing the data placement objective function using Lagrangian relaxation has been studied in \cite{drwal2014decomposition}. 

Unlike most of the past literature, in this work, we look at the non-metric data placement problem and devise distributed computation algorithms to obtain or approximate its global optimum solution. To that end, we provide two distributed algorithms, one based on Glauber dynamics in which the agents best respond (with some noise) to the resource allocation of the other agents, and one based on the first-price auction in which agents sell their cache spaces to bidders who represent different resource types. In both settings, we establish theoretical guarantees on the final resource allocation outcomes and provide new insights on solving the non-metric data placement problem more efficiently in a distributed manner.   

\subsection{Notations} 

We adopt the following notations throughout the paper. For a positive integer $n$, we let $[n]=\{1,2,\ldots,n\}$. For a discrete set $X\subseteq [n]$ and $i\in X$, we often write $X-i$ and $X+i$ to denote $X\setminus\{i\}$ and $X\cup \{i\}$, respectively. For two probability distributions $\mu$ and $\nu$ supported over a finite set $A$, we let $\|\mu-\nu\|_{TV}=\frac{1}{2}\sum_{a\in A}|\mu(a)-\nu(a)|$ be the total variation distance between those distributions. To denote the mixing time of a Markov chain with transition probability matrix $P$, we use $t_{\rm mix}(\epsilon)=\min\{t: d(t)<\epsilon\}$, where $d(t)=\sup_{\mu}\|\mu P^t-\pi\|_{TV}$ is the maximum total variation between the distribution of the Markov chain at time $t$ and its stationary distribution $\pi$. The $\ell_1$ norm of a vector $x$ is denoted by $\|x\|_1=\sum |x_i|$. Given two vectors $x$ and $y$, we let $\rho(x,y)$ be the number of coordinates for which those two vectors differ. We let $\boldsymbol{1}$ and $\boldsymbol{0}$ be the vectors with all one entries and all zero entries, respectively. Finally, for a real number $a$, we let $(a)^+=\max\{a,0\}$.

\section{Problem Formulation and Preliminary Results}
Let us consider the data placement problem \cite{baev2001approximation,swamy2016improved} in which there are a set of $[n]=\{1,2,\ldots,n\}$ agents and a set $[k]=\{1,2,\ldots,k\}$ of unit size resource types. We assume that there are unlimited copies of each resource type. Agent $i\in [n]$ has a cache of size $u_i\in \mathbb{Z}_+$, and the cost of storing resource $\ell$ in its cache is given by $f_i^{\ell}\ge 0$. Moreover, we assume that agent $i\in [n]$ has a nonnegative demand rate $w^{\ell}_i\ge 0$ for recourse $\ell$. In addition, we let $c_{ij}=c_{ji}\ge 0$ be the cost of getting access to agent $i$ from agent $j$.\footnote{Note that unlike \cite{baev2001approximation},\cite{swamy2016improved}, and \cite{etesami2020complexity}, we do not assume anything about the metric property of the access costs.} The goal is to fill agents' caches with proper resources to minimize the overall placement and access costs. More precisely, given an allocation of resources to the agents, let us use $X^{\ell}\subseteq [n]$ to denote the set of agents that hold resource $\ell$ in their caches. Then, the cost of agent $j$ to get access to resource $\ell$ is given by $w^{\ell}_{j}d(j,X^{\ell})$, where $d(j,X^{\ell})=\min\{c_{ij}: i\in X^{\ell}\}$ is the minimum distance that agent $j$ needs to travel to get access to resource $\ell$. In particular, the overall access cost among all the agents is given by $\sum_{j,\ell}w^{\ell}_{j}d(j,X^{\ell})$. An integer program (IP) formulation for the data placement problem is given by
\begin{align}\label{eq:primal}
\min &\sum_{i,j,\ell}w_j^{\ell}c_{ij}x^{\ell}_{ij}+\sum_{i,\ell}f_i^{\ell}y_{i}^{\ell}\cr
& x^{\ell}_{ij}\leq y_i^{\ell} \ \forall i,j,\ell,\cr
& \sum_{i=1}^n x^{\ell}_{ij}\ge 1 \ \forall j,\ell,\cr
& \sum_{\ell=1}^{k}y_{i}^{\ell}\leq u_i \ \forall i,\cr 
& x^{\ell}_{ij}, y^{\ell}_{i}\in \{0,1\}, \ \forall i,j,\ell,
\end{align} 
where $y^{\ell}_i=1$ if we allocate resource $\ell$ to agent $i$, and $x^{\ell}_{ij}=1$ if agent $j$ gets access to resource $\ell$ through agent $i$. The first set of constraints ensures that agent $j$ can access resource $\ell$ through agent $i$ only if agent $i$ holds resource $\ell$ in its cache. The second set of constraints implies that each agent $j$ has to get access to all the resources, and the last set of constraints is the cache capacity constraints that allow agent $i$ to hold at most $u_i$ resources in its cache. Subject to these constraints, the goal is to allocate the resources to the agents to minimize the sum of the placement cost $\sum_{i,\ell}f_i^{\ell}y_{i}^{\ell}$ and the access cost $\sum_{i,j,\ell}w_j^{\ell}c_{ij}x^{\ell}_{ij}$.    

\begin{lemma}
An instance of the data placement problem with arbitrary cache size can be reduced to a unit cache size instance by replacing each agent $i$ with cache size $u_i$, demand vector $w^i=(w_i^{1},\ldots,w_i^{k})$, and installment vector $f_i=(f_i^1,\ldots,f_i^{k})$ with $u_i$ identical agents $i_1,\ldots,i_{u_i}$ with demand vector $\frac{1}{u_i}w^i$ and installment vector $f_i$.
\end{lemma}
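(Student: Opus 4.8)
The plan is to exhibit a cost-preserving correspondence between feasible solutions of the arbitrary-cache instance and the unit-cache instance produced by the stated splitting, and then to conclude that the two instances have the same optimal value (so that solving one solves the other). First I would fix the access-cost structure on the enlarged agent set: the $u_i$ copies $i_1,\dots,i_{u_i}$ of an original agent $i$ are declared co-located, i.e. $c_{i_a i_b}=0$, while a copy of agent $i$ and a copy of agent $i'\neq i$ inherit the original distance $c_{i i'}$. With this convention the distance from any copy of $j$ to the set of copies holding resource $\ell$ equals the original quantity $d(j,X^{\ell})$, which is exactly what makes the access costs line up.

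For the forward direction (arbitrary $\to$ unit), I would take any feasible $(x,y)$ for the arbitrary-cache instance. Since agent $i$ caches at most $u_i$ resources but now has exactly $u_i$ unit-cache copies, I can place its cached resources one per copy, leaving the remaining copies empty; this yields a feasible allocation for the unit-cache instance with the same placement cost $\sum_{i,\ell} f_i^{\ell} y_i^{\ell}$, because the installment vector $f_i$ is left unchanged. Routing each copy of $j$ to the same holder $i$ used in the original solution, each of the $u_j$ copies of $j$ contributes access cost $\frac{1}{u_j} w_j^{\ell} c_{ij}$, which sums over the $u_j$ copies to the original $w_j^{\ell} c_{ij}$. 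Hence the total access cost is preserved and $\mathrm{OPT}_{\text{unit}}\le \mathrm{OPT}_{\text{arb}}$.

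For the reverse direction (unit $\to$ arbitrary), given a feasible unit-cache solution I would let agent $i$ cache the \emph{set} of resources held by its copies $i_1,\dots,i_{u_i}$. Because each copy holds at most one resource, this set has cardinality at most $u_i$, so the cache constraint $\sum_{\ell} y_i^{\ell}\le u_i$ is satisfied. Collapsing duplicated copies of the same resource can only lower the placement cost, since every $f_i^{\ell}\ge 0$, and it does not change which resources are reachable from any location, so by the co-location convention the nearest holder (and hence the access cost) is unchanged. This gives $\mathrm{OPT}_{\text{arb}}\le \mathrm{OPT}_{\text{unit}}$, and combining the two inequalities yields equality of the optimal values together with an explicit map between optimal solutions.

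I expect the main obstacle to be the access-cost bookkeeping rather than the feasibility argument: one must verify that scaling the demand by $\frac{1}{u_i}$ together with the co-location convention \emph{exactly} compensates for the $u_i$-fold replication, so that no access cost is created or destroyed in either direction, and that collapsing resources duplicated across distinct copies of the same agent preserves the cache bound while, via nonnegativity of $f_i^{\ell}$, never increasing the cost.
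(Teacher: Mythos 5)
Your proof is correct and uses essentially the same construction as the paper: replace agent $i$ by $u_i$ collocated unit-cache copies with demand $\frac{1}{u_i}w^i$ and installment vector $f_i$, and verify that placement and access costs are preserved under this correspondence. The only difference is that you also prove the reverse direction explicitly (collapsing a unit-cache solution back to the original instance, using $f_i^{\ell}\ge 0$ to handle copies caching duplicate resources), whereas the paper only verifies cost preservation for the map from arbitrary-cache allocations to unit-cache allocations and leaves the converse implicit; your two-sided argument is therefore slightly more complete, since the reduction genuinely requires both inequalities between the optimal values.
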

\begin{proof}
Consider an arbitrary allocation profile $X=(X^{\ell}, \ell\in [k])$ in the original instance and assume that the content of cache $i$ is filled with resources $\ell_1,\ldots,\ell_{u_i}$. Now let us replace agent $i$ with $u_i$ collocated unit cache size agents, where the caches of agents $i_1,\ldots,i_{u_i}$ are filled with $\ell_1,\ldots,\ell_{u_i}$, respectively. For any agent $j\neq i$, the cost of getting $j$ access to all the resources is the same in both instances. Moreover, the cost of agent $i$ in the original instance equals $\sum_{\ell}w_{i}^{\ell}d(i, X^{\ell})+\sum_{r=1}^{u_i}f_i^{\ell_r}$, while the total cost of agents $i_1,\ldots,i_{u_i}$ in the new instance equals 
\begin{align}\nonumber
\sum_{\ell}\sum_{r=1}^{u_i}  w^{\ell}_{i_r}d(i_r, X^{\ell})+\sum_{r=1}^{u_i}f_i^{\ell_r}=\sum_{\ell}\sum_{r=1}^{u_i}\frac{1}{u_i}w^{\ell}_{i}d(i, X^{\ell})+\sum_{r=1}^{u_i}f_i^{\ell_r}=\sum_{\ell}w_{i}^{\ell}d(i, X^{\ell})+\sum_{r=1}^{u_i}f_i^{\ell_r},
\end{align}
which shows that the total costs of the two instances are the same. The proof is completed by repeating the same process for every agent $i$ until all the agents have unit cache size. 
\end{proof}

In fact, the above reduction holds even for metric access costs, i.e., when $c_{ik}\leq c_{ij}+c_{jk}, \forall i,j,k$. The reason is that the distances in the unit cache size instance still satisfy the metric property, as the distance between any two agents is the same as the distance between their collocated copies. Henceforth, we will focus only on the data placement problem with unit-cache size. But before we get into the analysis, in the following proposition, we show that even approximating the non-metric data placement problem to within a logarithmic factor is a hard problem.

\begin{proposition}\label{prop_NP-hard}
It is NP-hard to approximate the non-metric data placement problem up to a factor better than $O(\ln n)$.
\end{proposition}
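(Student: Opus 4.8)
The plan is to establish the logarithmic inapproximability through an approximation-preserving reduction from the \textsc{Set Cover} problem, which is NP-hard to approximate within a factor $(1-\epsilon)\ln N$ (where $N$ is the size of the universe) by the classical results of Feige and of Dinur--Steurer. Equivalently, one observes that the data placement problem with a single resource type ($k=1$) is exactly the uncapacitated facility location problem (UFLP) with opening costs $f_i$ and connection costs $w_j c_{ij}$; since the access costs $c_{ij}$ are unrestricted, this is precisely \emph{non-metric} UFLP, which is well known to be Set-Cover-hard. Thus it suffices to encode an arbitrary Set Cover instance as a single-resource data placement instance.

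Concretely, given a Set Cover instance with universe $\mathcal{U}=\{e_1,\dots,e_N\}$ and sets $S_1,\dots,S_p$, I would build an instance with one resource type and unit caches (justified by the preceding Lemma). I introduce one \emph{set agent} $a_t$ for each set $S_t$ and one \emph{element agent} $b_e$ for each element $e$, so the number of agents is $n=p+N$. The set agents act as facilities: each has placement cost $f_{a_t}=1$ and zero demand $w_{a_t}=0$. The element agents act as clients: each has unit demand $w_{b_e}=1$ and a prohibitively large placement cost $f_{b_e}=M$, so it never caches the resource itself. The symmetric access costs encode incidence: $c_{b_e,a_t}=0$ whenever $e\in S_t$ and $c_{b_e,a_t}=M$ whenever $e\notin S_t$, with all remaining costs set to $0$. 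Since element agents never open and set agents carry no demand, opening the set agents indexed by $T\subseteq[p]$ costs exactly $|T|+\sum_e \min_{t\in T} c_{b_e,a_t}$, where the access term is $0$ if $T$ covers $e$ and at least $M$ otherwise. Hence any solution of cost below $M$ corresponds to a genuine set cover $T$ of cost $|T|$, and the optimal placement cost equals the minimum cover size.

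I expect the crux to be the choice of the large constant $M$, which must simultaneously be polynomially bounded (so the reduction runs in polynomial time) and large enough to preserve the correspondence between \emph{approximate} solutions, not merely the optima. The subtlety is that an $\alpha$-approximate placement solution could in principle leave an element uncovered, pay $M$ for it, and still be within $\alpha$ of the optimum. To preclude this I would take $M=np+1$ with $n=p+N$: since the optimum is at most $p$ and the targeted ratio satisfies $\alpha=O(\ln n)\le n$, any solution within factor $\alpha$ has cost at most $np<M$ and therefore must cover every element. Such a solution opens a family $T$ with $|T|\le \alpha\,\mathrm{OPT}(\mathrm{SC})$, i.e.\ an $\alpha$-approximate set cover.

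To close the argument I would relate the two hardness factors: for the hard Set Cover instances the number of sets satisfies $p=\mathrm{poly}(N)$, so $n=p+N$ is polynomial in $N$ and $\ln N=\Theta(\ln n)$. Consequently an approximation ratio better than $c\ln n$ for the data placement problem would beat $(1-\epsilon)\ln N$ for Set Cover, a contradiction, which yields the claimed $O(\ln n)$ inapproximability. The remaining verifications are routine: the cost matrix above is symmetric, so $c_{ij}=c_{ji}$ holds; the access constraint $\sum_i x_{ij}\ge 1$ for the zero-demand set agents is satisfiable at no cost as soon as one facility is open; and $M$ is polynomially bounded as required.
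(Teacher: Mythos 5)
Your proof is correct and follows essentially the same route as the paper's: both arguments realize the non-metric uncapacitated facility location problem as a special case of the data placement problem and then invoke the classical $(1-\epsilon)\ln N$ set-cover hardness. The only differences are cosmetic — you use a single resource type ($k=1$) where the paper uses two (a real resource plus a zero-cost dummy marking ``closed'' facilities), and you inline the standard Set-Cover-to-UFLP gadget (with the polynomially bounded penalty $M$ and the factor translation $\ln N=\Theta(\ln n)$) that the paper instead delegates to the cited literature.
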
 
\begin{proof}
We show that the non-metric uncapacitated facility location problem can be formulated as a special instance of the non-metric data placement problem. On the other hand, it is known that the non-metric UFLP with $n$ clients is at least as hard as the set cover problem, which is hard to approximate within an $O(\ln n)$ factor \cite{hochbaum1982heuristics,bienkowski2020nearly}. Therefore, the same inapproximability result must also hold for the non-metric data placement problem. 

Consider an arbitrary instance of the UFLP with the same set $[n]$ of clients and facilities, non-metric access costs $\{c_{ij}: i,j\in [n]\}$, and facility installment costs $\{f_i, i\in [n]\}$. This problem can be formulated as an instance of the data placement problem with a set $[n]$ of agents, non-metric access costs $\{c_{ij}: i,j\in [n]\}$, and $k=2$ resources. For the first resource we set $w^{1}_j=1, f^{1}_i=f_i,  \forall i,j\in [n]$. For the second (dummy) resource we set $w^{2}_j=0, f^{2}_i=0,  \forall i,j\in [n]$. In other words, the agents that receive resource $\ell=1$ correspond to the set of open facilities in the UFLP, while the agents that receive the dummy resource $\ell=2$ correspond to the set of closed facilities. By the construction, it should be clear that any optimal solution to the UFLP corresponds to an optimal solution in the data placement problem with the same objective cost, and vice versa.   
\end{proof}

Despite the above negative result, we are still interested in finding distributed algorithms that perform well in most instances of the non-metric data placement problem. To that end, we will develop two distributed algorithms in which either the agents or the resources are viewed as selfish entities that aim to maximize their payoffs, and we analyze the performance of the allocation profiles resulting from agents' interactions.

\section{A Game-Theoretic Decomposition for the Data Placement Problem}

Let us consider the objective function of the unit cache size data placement problem 
\begin{align}\nonumber
\Phi(x)=\sum_{j,\ell}w_j^{\ell}d(j,X^{\ell})+\sum_{i}f_i^{x_i},
\end{align}
where $x=(x_1,\ldots,x_n)\in [k]^n$ denotes the resource allocation profile of all the agents, $X^{\ell}=\{i: x_i=\ell\}$ is the set of agents that have resource $\ell$ in their cache, and $d(j,X^{\ell})=\min\{c_{ji}: i\in X^{\ell}\}$. Consider a noncooperative game in which each agent $i\in [n]$ can be viewed as one player with the action set $[k]$. The action of player $i$ is the resource $x_i\in [k]$ that it caches, and incurs a cost that is given by 
\begin{align}\nonumber
c_i(x)=\sum_{j,\ell}w_j^{\ell}\big(d(j,X^{\ell})-c_{ij}\big)^++f_i^{x_i},
\end{align}
where for a real number $a$ we define $(a)^+=\max\{0,a\}$. 

\begin{lemma}\label{lemm:potential}
The above noncooperative game $\mathcal{G}=([n],[k]^n,\{c_i\})$ is an exact potential game with the potential function $\Phi(x)$. 
\end{lemma}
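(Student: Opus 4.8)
The plan is to verify the defining identity of an exact potential game directly. It suffices to show that for every player $i$, every fixed profile $x_{-i}$ of the other players, and every pair of actions $a,b\in[k]$,
\[
c_i(a,x_{-i})-c_i(b,x_{-i})=\Phi(a,x_{-i})-\Phi(b,x_{-i}).
\]
Writing $x=(a,x_{-i})$ and $x'=(b,x_{-i})$, and letting $X^\ell,\tilde X^\ell$ denote the resource sets induced by $x,x'$, I would first observe that switching $i$ from $a$ to $b$ changes only two sets: $X^a$ loses $i$ (so $\tilde X^a=X^a-i$) and $X^b$ gains $i$ (so $\tilde X^b=X^b+i$), while $X^\ell=\tilde X^\ell$ for all $\ell\notin\{a,b\}$. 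Hence in both $\Phi$ and $c_i$ the placement term contributes exactly $f_i^a-f_i^b$, and every access term with $\ell\notin\{a,b\}$ cancels. This reduces the whole statement to comparing, for the two resources $a$ and $b$, the access-cost changes appearing in $\Phi$ against those appearing in $c_i$.

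The second step is to isolate a single elementary identity that does all the work: for a client $j$ and any set $S$ with $i\notin S$,
\[
\big(d(j,S)-c_{ij}\big)^+-\big(d(j,S+i)-c_{ij}\big)^+=d(j,S)-d(j,S+i).
\]
This says that inserting (or, read backwards, deleting) agent $i$ changes the clipped quantity $(d-c_{ij})^+$ by exactly the amount it changes the true distance $d$, which is precisely why the $(\cdot)^+$ clipping in the definition of $c_i$ is the right choice. I would prove it by splitting on whether $c_{ij}<d(j,S)$ or $c_{ij}\ge d(j,S)$: in the former case $i$ becomes the nearest holder, $d(j,S+i)=c_{ij}$, the second clipped term vanishes, and both sides equal $d(j,S)-c_{ij}$; in the latter case $i$ is irrelevant, $d(j,S+i)=d(j,S)$, and both sides are $0$.

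Finally I would apply this identity with $S=\tilde X^a$ for the deletion at resource $a$ (so $S+i=X^a$) and with $S=X^b$ for the insertion at resource $b$ (so $S+i=\tilde X^b$), multiply by the demand weights $w_j^a$ or $w_j^b$, and sum over $j$; this shows that the access-cost contributions of resources $a$ and $b$ to $c_i(a,x_{-i})-c_i(b,x_{-i})$ coincide term by term with their contributions to $\Phi(a,x_{-i})-\Phi(b,x_{-i})$. Adding back the matching placement difference $f_i^a-f_i^b$ then yields the exact potential identity, and since $i,x_{-i},a,b$ were arbitrary, $\Phi$ is an exact potential for $\mathcal{G}$. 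The only real subtlety is careful bookkeeping of the sign and of the direction of the $\min$ under insertion versus deletion; once the clipping identity above is isolated and stated for a general set $S$, it covers both directions at once and the remainder is a routine substitution.
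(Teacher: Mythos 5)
Your proposal is correct and follows essentially the same route as the paper's proof: a direct verification of the exact-potential identity, reducing everything to the clipping identity $\big(d(j,S)-c_{ij}\big)^+ = d(j,S)-d(j,S+i)$ for $i\notin S$, proved by the same two-case split on whether $c_{ij}<d(j,S)$. The only difference is organizational: you state this identity once for a general set $S$ and apply it to both the deletion at $a$ and the insertion at $b$, whereas the paper carries out the two applications inline as separate case analyses.
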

\begin{proof}
Consider an arbitrary player $i$ and an action (allocation) profile $x=(x_i,x_{-i})$ such that $x_i=\ell$. Assume that player $i$ changes its action from $x_i=\ell$ to $x'_i=\ell'$, and call the new action profile $x'=(x'_i,x_{-i})$. For any $o\in [k]$, let us use $X^o$ and $X'^o$ to denote the set of players holding resource $o$ in action profiles $x$ and $x'$, respectively. Then, we have $i\in X^{\ell}, i\notin X^{\ell'}$ and $X'^{\ell}=X^{\ell}-i, X'^{\ell'}=X^{\ell'}+i$. We can write 
\begin{align}\label{eq:change-in-cost}
c_i(x'_i,x_{-i})-c_i(x_i,x_{-i})&=\sum_j w_j^{\ell}\big(d(j,X'^{\ell})-c_{ij}\big)^++\sum_j w_j^{\ell'}\big(d(j,X'^{\ell'})-c_{ij}\big)^+\cr
&\qquad-\sum_j w_j^{\ell}\big(d(j,X^{\ell})-c_{ij}\big)^+-\sum_j w_j^{\ell'}\big(d(j,X^{\ell'})-c_{ij}\big)^++f_i^{\ell'}-f_i^{\ell}\cr
&=\sum_j w_j^{\ell}\big(d(j,X^{\ell}-i)-c_{ij}\big)^+-\sum_j w_j^{\ell'}\big(d(j,X^{\ell'})-c_{ij}\big)^++f_i^{\ell'}-f_i^{\ell},
\end{align}
where the first equality holds because for any resource $o\notin\{\ell, \ell'\}$, we have $X^o=X'^o$. The second equality follows from $d(j,X^{\ell})\leq c_{ij}$ and $d(j,X'^{\ell'})\leq c_{ij}$ because $i\in X^{\ell}, i\in X'^{\ell'}$.

Next, we compute the amount of change in the potential function $\Phi(x)$. We have
\begin{align}\label{eq:change-in-potential}
\Phi(x'_i,x_{-i})&-\Phi(x_i,x_{-i})\cr 
&=\sum_j w_j^{\ell}d(j,X'^{\ell})+\sum_j w_j^{\ell'}d(j,X'^{\ell'})-\sum_j w_j^{\ell}d(j,X^{\ell})-\sum_j w_j^{\ell'}d(j,X^{\ell'})+f_i^{\ell'}-f_i^{\ell}\cr 
&=\sum_j w_j^{\ell}\big(d(j,X'^{\ell})-d(j,X^{\ell})\big)-\sum_j w_j^{\ell'}\big(d(j,X^{\ell'})-d(j,X'^{\ell'})\big)+f_i^{\ell'}-f_i^{\ell}\cr
&=\sum_j w_j^{\ell}\big(d(j,X^{\ell}-i)-d(j,X^{\ell})\big)-\sum_j w_j^{\ell'}\big(d(j,X^{\ell'})-d(j,X^{\ell'}+i)\big)+f_i^{\ell'}-f_i^{\ell}\cr 
&=\sum_j w_j^{\ell}\big(d(j,X^{\ell}-i)-d(j,X^{\ell})\big)-\sum_j w_j^{\ell'}\big(d(j,X^{\ell'})-c_{ij}\big)^++f_i^{\ell'}-f_i^{\ell}\cr
&=\sum_j w_j^{\ell}\big(d(j,X^{\ell}-i)-c_{ij}\big)^+-\sum_j w_j^{\ell'}\big(d(j,X^{\ell'})-c_{ij}\big)^++f_i^{\ell'}-f_i^{\ell},
\end{align} 
where in the fourth equality we have used the fact that $d(j,X^{\ell'})-d(j,X^{\ell'}+i)=\big(d(j,X^{\ell'})-c_{ij}\big)^+$ by considering two cases. First, if $d(j,X^{\ell'})\leq c_{ij}$, then $d(j,X^{\ell'}+i)=d(j,X^{\ell'})$, and thus $d(j,X^{\ell'})-d(j,X^{\ell'}+i)=0=\big(d(j,X^{\ell'})-c_{ij}\big)^+$. Second, if $d(j,X^{\ell'})> c_{ij}$, then $d(j,X^{\ell'}+i)=c_{ij}$, and thus $d(j,X^{\ell'})-d(j,X^{\ell'}+i)=d(j,X^{\ell'})-c_{ij}=\big(d(j,X^{\ell'})-c_{ij}\big)^+$. Similarly, the last equality is obtained from $d(j,X^{\ell}-i)-d(j,X^{\ell})=\big(d(j,X^{\ell}-i)-c_{ij}\big)^+$, which can be shown by considering two cases: if $d(j,X^{\ell}-i)\leq c_{ij}$, then $d(j,X^{\ell}-i)=d(j,X^{\ell})$, and thus $d(j,X^{\ell}-i)-d(j,X^{\ell})=0=\big(d(j,X^{\ell}-i)-c_{ij}\big)^+$. Otherwise, if $d(j,X^{\ell}-i)> c_{ij}$, then $d(j,X^{\ell})=c_{ij}$, and thus $d(j,X^{\ell}-i)-d(j,X^{\ell})=d(j,X^{\ell}-i)-c_{ij}=\big(d(j,X^{\ell}-i)-c_{ij}\big)^+$. Finally, by comparing \eqref{eq:change-in-cost} and \eqref{eq:change-in-potential}, one can see that $\Phi(x'_i,x_{-i})-\Phi(x_i,x_{-i})=c_i(x'_i,x_{-i})-c_i(x_i,x_{-i})$, which completes the proof.     
\end{proof}

\subsection{Performance Guarantee of a Pure Nash Equilibrium}

As a result of Lemma \ref{lemm:potential}, if players selfishly update their resources by minimizing their cost functions, the overall allocation profile will converge to a pure Nash equilibrium (NE), which must be a local minimum of the potential function. Therefore, one could ask about the quality of the solution obtained at a NE compared to the global optimum of the potential function, which is the optimal solution to the data placement problem. To evaluate the quality of a solution obtained at a NE, we leverage the dual program corresponding to the linear program relaxation of the data placement problem \eqref{eq:primal}, which is given by 
\begin{align}\label{eq:dual}
\max &\sum_{j,\ell}\beta_j^{\ell}-\sum_i \alpha_i\cr
& \beta_j^{\ell}-u_{ij}^{\ell}\leq w_j^{\ell}c_{ij} \ \forall i,j,\ell,\cr
& \sum_j u^{\ell}_{ij}-\alpha_i\leq  f^{\ell}_i \ \forall i,\ell,\cr
& u^{\ell}_{ij}, \beta^{\ell}_{j}, \alpha_i \ge 0, \ \forall i,j,\ell.
\end{align} 
Using the first set of constraints, in an optimal dual solution we may assume $u^{\ell}_{ij}=\big(\beta_j^{\ell}-w_j^{\ell}c_{ij}\big)^+, \forall i,j,\ell$. Otherwise, if $u^{\ell}_{ij}>\big(\beta_j^{\ell}-w_j^{\ell}c_{ij}\big)^+$ for some $i,j,\ell$, we can create a new feasible dual solution by reducing $u^{\ell}_{ij}$ to $\big(\beta_j^{\ell}-w_j^{\ell}c_{ij}\big)^+$. Such a change preserves the dual feasibility of the second set of constraints while potentially allowing one to reduce $\alpha_i$ and hence increase the dual objective value. By abuse of notation, if we use $\beta_{j}^{\ell}$ to denote $\frac{\beta_j^{\ell}}{w_j^{\ell}}$, we can write the dual program \eqref{eq:dual} in an equivalent form as
\begin{align}\label{eq:eq_dual}
\max &\sum_{j,\ell}w_j^{\ell}\beta_j^{\ell}-\sum_i \alpha_i\cr
& \sum_j w_j^{\ell}\big(\beta_j^{\ell}-c_{ij}\big)^+-f^{\ell}_i\leq \alpha_i   \ \forall i,\ell,\cr
& \beta^{\ell}_{j}, \alpha_i \ge 0, \ \forall i,j,\ell.
\end{align}

Next, let us use $x=(x_i,x_{-i})$ to denote a pure NE of the potential game $\mathcal{G}$. Then, for any player $i$ and any action $x'_i$, if we let $x'=(x'_i,x_{-i})$, we must have $c_i(x)\leq c_i(x')$, which implies
\begin{align}\nonumber
\sum_jw_j^{x_i}\big(d(j,X^{x_i})-c_{ij}\big)^+&+\sum_jw_j^{x'_i}\big(d(j,X^{x'_i})-c_{ij}\big)^++f_i^{x_i}\cr 
 &\leq \sum_jw_j^{x_i}\big(d(j,X^{x_i}-i)-c_{ij}\big)^++\sum_jw_j^{x'_i}\big(d(j,X^{x'_i}+i)-c_{ij}\big)^++f_i^{x'_i}.
\end{align}  
Since $i\in X^{x_i}$ and $i\in X^{x'_i}+i$, we have
\begin{align}\nonumber
\sum_jw_j^{x'_i}\big(d(j,X^{x'_i})-c_{ij}\big)^+-f_i^{x'_i}\leq \sum_jw_j^{x_i}\big(d(j,X^{x_i}-i)-c_{ij}\big)^+-f_i^{x_i}, \ \ \forall i, x'_i.
\end{align}
That means that if we define $\beta_{j}^{\ell}=d(j,X^{\ell})\ge 0$ and $\alpha_i=\sum_jw_j^{x_i}\big(d(j,X^{x_i}-i)-c_{ij}\big)^+-f_i^{x_i}$, then $(\alpha_i,\beta^{\ell}_{j})$ forms a feasible dual solution to the dual program \eqref{eq:eq_dual} whose objective value by weak duality is less than the optimal fractional solution to the LP relaxation of \eqref{eq:primal}. Therefore, if the optimal solution of the data placement problem is denoted by $x^o$ with minimum objective cost $\Phi(x^o)$, we have
\begin{align}\nonumber
\Phi(x^o)\ge \sum_{j,\ell}w_j^{\ell}\beta^{\ell}_{j}-\sum_{i}\alpha_i&=\sum_{j,\ell}w_j^{\ell}d(j,X^{\ell})+\sum_if_i^{x_i}-\sum_i \sum_jw_j^{x_i}\big(d(j,X^{x_i}-i)-c_{ij}\big)^+-\sum_if_i^{x_i}\cr 
&=\Phi(x)-\sum_i \sum_jw_j^{x_i}\big(d(j,X^{x_i}-i)-c_{ij}\big)^+.
\end{align}
As a result, the objective value of the solution obtained at NE $x$ is at most
\begin{align}\nonumber
\Phi(x)&\leq \Phi(x^o)+\sum_i \sum_jw_j^{x_i}\big(d(j,X^{x_i}-i)-c_{ij}\big)^+\cr 
&=\Phi(x^o)+\sum_{\ell}\sum_{i\in X^{\ell}} \sum_jw_j^{\ell}\big(d(j,X^{\ell}-i)-c_{ij}\big)^+\cr 
&=\Phi(x^o)+\sum_{\ell}\sum_j w_j^{\ell}\Big(\sum_{i\in X^{\ell}} \big(d(j,X^{\ell}-i)-c_{ij}\big)^+\Big)\cr
&=\Phi(x^o)+\sum_{\ell}\sum_j w_j^{\ell}\big(d(j,X^{\ell}-i_j)-d(j,X^{\ell})\big), 
\end{align}
where $i_j=\argmin_{k\in X^{\ell}} c_{jk}$, and the last equality holds because for any $i\in X^{\ell}-i_j$, we have $d(j,X^{\ell}-i)=c_{ji_j}=d(j,X^{\ell})\leq c_{ij}$, and hence $\big(d(j,X^{\ell}-i)-c_{ij}\big)^+=0$. Thus, if we let $\Phi(x\setminus j)$ be the value of the potential function when the cache content of player $j$ is evacuated, from the above expression we have 
\begin{align}\nonumber
\Phi(x)&\leq \Phi(x^o)+\sum_{\ell}\sum_j w_j^{\ell}\big(d(j,X^{\ell}-i_j)-d(j,X^{\ell})\big)\cr
&= \Phi(x^o)+\sum_j \big(\Phi(x\setminus j)-\Phi(x)\big),
\end{align}
or, equivalently,
\begin{align}\nonumber
\Phi(x)\leq \frac{\Phi(x^o)+\sum_j \Phi(x\setminus j)}{n+1}.
\end{align}
That gives an upper bound for the quality of a NE in terms of the global minimum value and the objective function's sensitivity to each player's cache content at that NE. 

In fact, we believe that in the worst-case scenario, the quality of an arbitrary NE can be significantly smaller than that of a globally optimal solution. The reason is that a NE is the outcome of a local search algorithm that is unimprovable up to a single-player deviation. However, it is known that for the simpler UFLP or $k$-median problem, a richer class of local search moves are required to guarantee the existence of a ``good" suboptimal solution \cite{williamson2011design}. Therefore, in the next section, we rely on Glauber dynamics with noisy updates to steer the resource allocation outcome resulting from players' interactions closer to the global optimal solution.

\section{Glauber Dynamics for Finding A Global Optimal Solution}

Let $\mathcal{X}=[k]^n$ be the space of all possible resource allocations. We consider Glauber dynamics over the space $\mathcal{X}$ in which players iteratively update their cache contents. More precisely, given an allocation profile $x\in \mathcal{X}$, at each time instance $t=1,2,\ldots$, one player $i$ will be chosen uniformly and independently from the past and will update its resource to $o\in [k]$ with probability
\begin{align}\label{eq:gauber_dynamics}
\frac{e^{-\beta c_i(o, x_{-i})}}{\sum_{\ell\in [k]}e^{-\beta c_i(\ell, x_{-i})}},
\end{align}   
where $\beta\in [0, \infty)$ is a noise parameter. In other words, given that player $i$ is chosen to update its resource at time $t$, the probability that it caches resource $o$ is proportional to the utility that resource $o$ brings to that player subject to an additional noise $\beta$ that captures the uncertainty or mistake of player $i$ in choosing resource $o$. As $\beta\to \infty$, the above Glauber dynamics replicate the best response dynamics. Moreover, one can see that the above Glauber dynamics induce a Markov chain over the state space of all the allocation profiles $\mathcal{X}$. The following lemma shows that the stationary distribution of such a Markov chain is given by the Gibbs distribution with respect to the potential function $\Phi$.
\begin{lemma}
The stationary distribution of the Markov chain induced by the Glauber dynamics is given by $\pi:\mathcal{X}\to [0,1]$, where 
\begin{align}\label{eq:gibbs}
\pi(x)=\frac{e^{-\beta \Phi(x)}}{\sum_{z\in \mathcal{X}}e^{-\beta \Phi(z)}}. 
\end{align}
\end{lemma}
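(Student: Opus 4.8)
The plan is to establish that $\pi$ is stationary by verifying the detailed balance (reversibility) condition $\pi(x)P(x,x')=\pi(x')P(x',x)$ for every pair of states $x,x'\in\mathcal{X}$, where $P$ denotes the one-step transition matrix of the Glauber dynamics. Once reversibility is in hand, stationarity is immediate: summing over $x$ gives $\sum_{x}\pi(x)P(x,x')=\pi(x')\sum_{x}P(x',x)=\pi(x')$, i.e. $\pi P=\pi$.

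First I would write $P$ explicitly. By the update rule \eqref{eq:gauber_dynamics}, a transition from $x$ to a profile $x'$ differing from $x$ only in the coordinate of a single player $i$ occurs with probability
\begin{align}\nonumber
P(x,x')=\frac{1}{n}\cdot\frac{e^{-\beta c_i(x'_i,x_{-i})}}{\sum_{\ell\in[k]}e^{-\beta c_i(\ell,x_{-i})}},
\end{align}
where the factor $1/n$ accounts for selecting player $i$ uniformly. If $x$ and $x'$ differ in two or more coordinates, then $P(x,x')=P(x',x)=0$ and detailed balance holds trivially; the case $x=x'$ is likewise immediate. So the only substantive case is when $x$ and $x'$ agree everywhere except at a single coordinate $i$, say $x_i=\ell$ and $x'_i=o$, so that $x_{-i}=x'_{-i}$.

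In that case the normalizing sums appearing in $P(x,x')$ and $P(x',x)$ coincide, since both are computed over the same frozen profile $x_{-i}=x'_{-i}$ and do not depend on the value being updated. Substituting the Gibbs form \eqref{eq:gibbs} (whose global normalizing constant cancels) together with the explicit $P$, the detailed balance identity reduces, after cancelling the common normalizing factor, to
\begin{align}\nonumber
e^{-\beta\Phi(x)}\,e^{-\beta c_i(o,x_{-i})}=e^{-\beta\Phi(x')}\,e^{-\beta c_i(\ell,x_{-i})},
\end{align}
which is equivalent to $\Phi(x')-\Phi(x)=c_i(x')-c_i(x)$. But this is precisely the exact-potential property established in Lemma \ref{lemm:potential}, applied to the single-player deviation taking $x_i=\ell$ to $x'_i=o$. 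Hence detailed balance holds and $\pi$ is stationary.

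The argument is essentially routine once the potential structure is invoked, and I do not expect a serious technical obstacle; the one point worth double-checking is that the two denominators are genuinely identical, which hinges on their dependence only on the unchanged coordinates $x_{-i}$. To conclude that $\pi$ is the \emph{unique} stationary distribution I would also remark that the chain is irreducible (any profile is reachable from any other by changing one coordinate at a time, each such single-coordinate step having positive probability) and aperiodic (each state carries a positive self-loop probability, since a chosen player may re-select its current resource).
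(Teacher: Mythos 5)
Your proof is correct and follows essentially the same route as the paper: both verify the detailed-balance condition $\pi(x)P(x,x')=\pi(x')P(x',x)$ for single-coordinate transitions and reduce it to the exact-potential identity $\Phi(x')-\Phi(x)=c_i(x')-c_i(x)$ from Lemma \ref{lemm:potential}, the only cosmetic difference being that you cancel the identical local normalizers directly while the paper rewrites the transition probabilities in terms of $\Phi$. Your added remark on irreducibility and aperiodicity (hence uniqueness of $\pi$) is a small bonus the paper omits.
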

\begin{proof}
We first note that any transition of the Markov chain is between two states that differ in the resource of at most one player. We show that the distribution \eqref{eq:gibbs} satisfies the \emph{detailed-balanced} conditions \cite{levin2017markov}, and hence must be a stationary distribution for the induced Markov chain. Let us consider two allocation profiles $x$ and $y$ that differ in the resource of at most one player $i$, that is, $x_{-i}=y_{-i}$. Then, we have
\begin{align}\nonumber
\pi(x)P_{xy}&= \pi(x)\frac{\frac{1}{n}e^{-\beta c_i(y_i, x_{-i})}}{\sum_{\ell\in [k]}e^{-\beta c_i(\ell, x_{-i})}}=\pi(x)\frac{\frac{1}{n}e^{-\beta (c_i(y_i, x_{-i})-c_i(x))}}{\sum_{\ell\in [k]}e^{-\beta (c_i(\ell, x_{-i})-c_i(x))}}\cr 
&=\pi(x)\frac{\frac{1}{n}e^{-\beta (\Phi(y_i, x_{-i})-\Phi(x))}}{\sum_{\ell\in [k]}e^{-\beta (\Phi(\ell, x_{-i})-\Phi(x))}}=\pi(x)\frac{\frac{1}{n}e^{-\beta \Phi(y_i, x_{-i})}}{\sum_{\ell\in [k]}e^{-\beta \Phi(\ell, x_{-i})}}\cr 
&=\frac{1}{n}\big(\frac{e^{-\beta \Phi(x)}}{\sum_{z\in \mathcal{X}}e^{-\beta \Phi(z)}}\big)\big(\frac{e^{-\beta \Phi(y)}}{\sum_{\ell\in [k]}e^{-\beta \Phi(\ell, x_{-i})}}\big).
\end{align}
Similarly, one can show that 
\begin{align}\nonumber
\pi(y)P_{yx}= \frac{\frac{1}{n}e^{-\beta \Phi(y)}}{\sum_{z\in \mathcal{X}}e^{-\beta \Phi(z)}}\frac{e^{-\beta c_i(x_i, x_{-i})}}{\sum_{\ell\in [k]}e^{-\beta c_i(\ell, x_{-i})}}=\frac{1}{n}\big(\frac{e^{-\beta \Phi(x)}}{\sum_{z\in \mathcal{X}}e^{-\beta \Phi(z)}}\big)\big(\frac{e^{-\beta \Phi(y)}}{\sum_{\ell\in [k]}e^{-\beta \Phi(\ell, x_{-i})}}\big).
\end{align}  
Comparing the above two relations shows that $\pi(x)P_{xy}=\pi(y)P_{yx}$, which completes the proof.   
\end{proof}

As a result, for sufficiently large $\beta$, the Glauber dynamics will concentrate on an allocation profile with the smallest potential function, which is the global minimum of the data placement problem. However, for larger $\beta$, the induced chain takes longer to mix to its stationary distribution. However, the following theorem shows that if $\beta$ is not very large, the induced Markov chain mixes quickly to its stationary Gibbs distribution.

\begin{theorem}\label{thm:mixing}
The mixing time of the Glauber dynamics for $\beta\leq \frac{k}{6nu}$ is at most $t_{\rm mix}(\epsilon)=O(n\ln\frac{n}{\epsilon})$, where $u=\max_{i,x}c_i(x)$. 
\end{theorem}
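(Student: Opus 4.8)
The plan is to prove rapid mixing via the path coupling method (see \cite{levin2017markov}), using the Hamming distance $\rho(x,y)$ from the notation as the metric on $\mathcal{X}=[k]^n$. Since $\rho$ is integer valued with diameter $n$, it suffices to build a one-step coupling that contracts $\rho$ in expectation by a factor $1-\delta$ on \emph{adjacent} states, i.e.\ pairs $x,y$ with $\rho(x,y)=1$; the path-coupling theorem then extends the contraction to all pairs and gives $t_{\rm mix}(\epsilon)\le \delta^{-1}\ln(n/\epsilon)$, so a rate $\delta=\Theta(1/n)$ yields exactly the claimed $O(n\ln\frac{n}{\epsilon})$.

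Let $x,y$ agree everywhere except at a single coordinate $v$. I couple the two chains by selecting the \emph{same} player $i\in[n]$ to update in both. If $i=v$, then $x_{-v}=y_{-v}$, so the two Glauber update laws in \eqref{eq:gauber_dynamics} coincide; coupling them by the identity makes the chains agree and drives $\rho$ to $0$. If $i=w\ne v$, I use the maximal coupling of the conditional laws $P_w(\cdot\mid x_{-w})$ and $P_w(\cdot\mid y_{-w})$, so the updated coordinates disagree with probability exactly $p_w:=\|P_w(\cdot\mid x_{-w})-P_w(\cdot\mid y_{-w})\|_{TV}$, turning $\rho$ into $2$, and otherwise leave $\rho=1$. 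Averaging over the uniform choice of $i$,
\begin{align}\nonumber
\EXP{\rho(x',y')}=\frac{1}{n}\Big((n-1)+\sum_{w\ne v}p_w\Big),
\end{align}
so the entire argument reduces to showing $\sum_{w\ne v}p_w$ is bounded away from $1$; if $\sum_{w\ne v}p_w\le \tfrac12$, then $\EXP{\rho(x',y')}\le 1-\tfrac{1}{2n}$ and we may take $\delta=\tfrac{1}{2n}$.

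The crux is therefore a sharp bound on $p_w$. Writing $a_o=c_w(o,x_{-w})$ and $b_o=c_w(o,y_{-w})$, I first observe that switching player $v$ between $x_v$ and $y_v$ only alters the sets $X^{x_v}$ and $X^{y_v}$, hence only the access-cost terms attached to those two resources, while the installment term $f_w^{o}$ cancels. Consequently $e_o:=a_o-b_o$ equals the \emph{same} constant $s$ for every $o\notin\{x_v,y_v\}$ and differs from $s$ only at the two resources $o\in\{x_v,y_v\}$. Because a Gibbs law is unchanged by adding a constant to all energies, $\mu_a=P_w(\cdot\mid x_{-w})$ is obtained from $\mu_b=P_w(\cdot\mid y_{-w})$ by reweighting only the coordinates $x_v,y_v$ by factors $e^{-\beta\Delta_1},e^{-\beta\Delta_2}$ with $|\Delta_1|,|\Delta_2|\le 2u$. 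Combining this with the near-uniformity of the conditional law---since $0\le c_w\le u$, every atom obeys $P_w(o\mid x_{-w})\le e^{\beta u}/k$---I expect a bound of the form $p_w\le C\beta u/k$ for an absolute constant $C$. This $1/k$ factor is exactly what lets the hypothesis $\beta\le \frac{k}{6nu}$ operate: it gives $\sum_{w\ne v}p_w< nC\beta u/k\le C/6\le\tfrac12$ (for $C\le 3$), hence $\delta=\Theta(1/n)$ and the stated mixing time.

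The main obstacle is precisely this refined estimate on $p_w$. The naive bound $|a_o-b_o|\le u$ together with a standard Gibbs comparison only yields $p_w\lesssim \beta u$ and thus $\sum_{w\ne v}p_w\lesssim n\beta u$, which would force $\beta=O(1/(nu))$ and lose the factor $k$ in the theorem. Recovering that factor requires genuinely using both (i) that the energy perturbation is supported on only two of the $k$ coordinates and (ii) that each atom of the Gibbs conditional is $O(1/k)$; the one calculation I would carry out in full is the careful control of the normalizing-constant ratio $R=\sum_{o}\mu_b(o)\,e^{-\beta(e_o-s)}$, from which both $|R-1|=O(\beta u/k)$ and the per-coordinate differences $|\mu_a(o)-\mu_b(o)|$ follow.
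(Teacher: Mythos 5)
Your proposal is correct and follows essentially the same route as the paper's proof: path coupling with the identical-player coupling, coalescence when the differing player $v$ is selected, optimal coupling for every other player, and the two structural facts you isolate (the energy difference is a constant shift except at the two resources $x_v,y_v$, and every atom of the Gibbs conditional is at most $e^{\beta u}/k$), which together give the crucial $O(\beta u/k)$ bound on each disagreement probability and hence a contraction rate of order $1/n$. The normalizing-constant calculation you defer is precisely what the paper carries out explicitly (its $A$, $B_1,B_2$, $C_1,C_2$ bookkeeping, yielding $\|\mu^{i'}-\nu^{i'}\|_{TV}\leq \frac{3\beta u}{k}e^{3\beta u}$), and it goes through exactly as you predict.
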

\begin{proof}
Let us consider two allocation profiles $x$ and $y$ that differ in the resource of exactly one player $i$, that is, $x_{-i}=y_{-i}$ and $x_i\neq y_i$. Let $Z^x_t$ and $Z^y_t$ be the Markov chains obtained from Glauber dynamics with initial states $x$ and $y$, respectively. Moreover, by abuse of notation, let us use $x$ and $y$ to denote the current states of the two Markov chains, respectively. Assuming that player $i'\in [n]$ is selected to update its action at the current time, the transition probability distributions of the chains denoted by $\mu^{i'}$ and $\nu^{i'}$ are given by
\begin{align}\nonumber
&\mu^{i'}_o=\frac{e^{-\beta c_{i'}(o, x_{-i'})}}{\sum_{\ell\in [k]}e^{-\beta c_{i'}(\ell, x_{-i'})}},\  o\in [k],\cr
&\nu^{i'}_o=\frac{e^{-\beta c_{i'}(o, y_{-i'})}}{\sum_{\ell\in [k]}e^{-\beta c_{i'}(\ell, y_{-i'})}},\ o\in [k].
\end{align}  
We couple these chains together by allowing the same player and the same resource (whenever possible) to be used in both chains at each time instance. More precisely, if player $i'=i$ is selected to update, then in both chains, we update the resource of player $i$ to $o$ with the probability given in \eqref{eq:gauber_dynamics}. Otherwise, if player $i'\neq i$ is selected, we update the resource of player $i'$ in both chains according to the optimal coupling between the distributions $\mu^{i'}$ and $\nu^{i'}$.\footnote{Given two random variables $X$ and $Y$ with distributions $\pi_X$ and $\pi_Y$, the optimal coupling between them induces a joint probability distribution $\mathbb{P}$ over $(X, Y)$ such that $P(X\neq Y)=\|\pi_X-\pi_Y\|_{TV}$, where $\|\cdot\|_{TV}$ denotes the total variation distance.}

For two action profiles $z,z'\in \mathcal{X}$, let $\rho(z, z')$ denote the number of positions in which $z$ and $z'$ differ from each other. According to the above coupling, when player $i$ is selected, the two chains become identical, i.e., $\rho(Z^{x}_1, Z^{y}_1)=0$. Thus, $\rho(Z^{x}_1, Z^{y}_1)$ might increase from $1$ to $2$ only if a player $i'\neq i$ were selected, and the resource of that player were updated to two different resources in those chains. Let $M^{i'}$ and $N^{i'}$ be the random variables denoting the updated resource of player $i'$ with corresponding distributions $\mu^{i'}$ and $\nu^{i'}$, respectively. We have
\begin{align}\nonumber
\mathbb{P}\{\rho(Z^{x}_1, Z^{y}_1)=2\}=\frac{1}{n}\sum_{i'\neq i}\mathbb{P}\{M^{i'}\neq N^{i'}\}=\frac{1}{n}\sum_{i'\neq i}\|\mu^{i'}-\nu^{i'}\|_{TV},
\end{align}
where the second equality holds because we use optimal coupling of distributions $\mu^{i'}$ and $\nu^{i'}$ to update the resource of player $i'$.

Next, we proceed to bound $\|\mu^{i'}-\nu^{i'}\|_{TV}$. Given action profiles $x=(x_i,x_{-i})$ and $y=(y_i,x_{-i})$, by abuse of notation, let $X^{x_i}$ and $X^{y_i}$ be the set of players in $[n]\setminus\{i'\}$ that hold resources $x_i$ and $y_i$ in the allocation profile $x$, respectively. Then, if $x_{i'}\notin \{x_i,y_i\}$, we have 
\begin{align}\nonumber
c_{i'}(x)-c_{i'}(y)&=\sum_{j}w_{j}^{y_i}\big(d(j,X^{y_i})-c_{i'j}\big)^++\sum_{j}w_{j}^{x_i}\big(d(j,X^{x_i})-c_{i'j}\big)^+\cr 
&-\sum_{j}w_{j}^{y_i}\big(d(j,X^{y_i}+i)-c_{i'j}\big)^+-\sum_{j}w_{j}^{x_i}\big(d(j,X^{x_i}-i)-c_{i'j}\big)^+:=\Delta.
\end{align}
Otherwise, if $x_{i'}=x_i$, then 
\begin{align}\nonumber
c_{i'}(x)-c_{i'}(y)&=\sum_{j}w_{j}^{y_i}\Big(\big(d(j,X^{y_i})-c_{i'j}\big)^+-\big(d(j,X^{y_i}+i)-c_{i'j}\big)^+\Big)=\Delta+\Delta_{x_i},
\end{align}
where $\Delta_{x_i}:=\sum_{j}w_{j}^{x_i}\Big(\big(d(j,X^{x_i}-i)-c_{i'j}\big)^+-\big(d(j,X^{x_i})-c_{i'j}\big)^+\Big)\ge 0$. Similarly, if $x_{i'}=y_i$, we have  
\begin{align}\nonumber
c_{i'}(x)-c_{i'}(y)=\sum_{j}w_{j}^{x_i}\Big(\big(d(j,X^{x_i})-c_{i'j}\big)^+-\big(d(j,X^{x_i}-i)-c_{i'j}\big)^+\Big)=\Delta-\Delta_{y_i},
\end{align}
where $\Delta_{y_i}:=\sum_{j}w_{j}^{y_i}\Big(\big(d(j,X^{y_i})-c_{i'j}\big)^+-\big(d(j,X^{y_i}+i)-c_{i'j}\big)^+\Big)\ge 0$. Now let us define the notations 
\begin{align}\nonumber
&B_1=e^{-\beta c_{i'}(x_i,x_{-i'})}, \ \ \ \ \qquad B_2=e^{-\beta c_{i'}(y_i,x_{-i'})}, \ \ \ \qquad\qquad B=B_1+B_2,\cr
&C_1=e^{-\beta(c_{i'}(x_i,x_{-i'})-\Delta_{x_i})},  \ \ \  C_2=e^{-\beta(c_{i'}(y_i,x_{-i'})+\Delta_{y_i})}, \ \ \qquad C=C_1+C_2,\cr 
&A=\sum_{\ell\notin \{x_i,y_i\}}e^{-\beta c_{i'}(\ell, x_{-i'})},
\end{align}
and note that $B_2\ge C_2$ and $C_1\ge B_1$. Then, the probability distribution $\nu^{i'}$ can be written as
\begin{align}\nonumber
\nu^{i'}_o=\begin{cases}
\frac{\exp(-\beta c_{i'}(o, x_{-i'}))}{A+C} & \mbox{if} \ o\notin\{x_i,y_i\}, \\
\frac{C_1}{A+C} & \mbox{if} \ o=x_i, \\
\frac{C_2}{A+C} & \mbox{if} \ o=y_i.
\end{cases}
\end{align}
Therefore, by definition of the total variation, we have
\begin{align}\label{eq:total-variations}
2\|\nu^{i'}-\mu^{i'}\|_{TV}&=|\nu^{i'}_{x_i}-\mu^{i'}_{x_i}|+|\nu^{i'}_{y_i}-\mu^{i'}_{y_i}|+\sum_{o\notin \{x_i,y_i\}}|\nu^{i'}_{o}-\mu^{i'}_{o}|.
\end{align}
To bound the last term in \eqref{eq:total-variations}, for any $o\notin \{x_i,y_i\}$, we have 
\begin{align}\nonumber
\sum_{o\notin \{x_i,y_i\}}|\nu^{i'}_{o}-\mu^{i'}_{o}|=\sum_{o\notin \{x_i,y_i\}}\big|\frac{e^{-\beta c_{i'}(o, x_{-i'})}}{A+C}-\frac{e^{-\beta c_{i'}(o, x_{-i'})}}{A+B}\big|=\frac{A|B-C|}{(A+C)(A+B)}.
\end{align}
Similarly, we can compute the first two terms in \eqref{eq:total-variations} as
\begin{align}\nonumber
|\nu^{i'}_{x_i}-\mu^{i'}_{x_i}|=\frac{C_1}{A+C}-\frac{B_1}{A+B}, \qquad\qquad |\nu^{i'}_{y_i}-\mu^{i'}_{y_i}|=\frac{B_2}{A+B}-\frac{C_2}{A+C}.
\end{align}
Substituting the above three relations into \eqref{eq:total-variations}, we get 
\begin{align}\nonumber
2\|\nu^{i'}-\mu^{i'}\|_{TV}=\frac{B_2-B_1}{A+B}+\frac{C_1-C_2}{A+C}+\frac{A|B-C|}{(A+C)(A+B)}.
\end{align}
Let us define $u=\max_{i,x}c_i(x)$ and note that $\Delta_{x_i}\leq u$ and $\Delta_{y_i}\leq u$. Then, $A+B\ge ke^{-\beta u}$ and $A+C\ge ke^{-2\beta u}$. Using the mean-value theorem for $f(r)=e^{-\beta r}$, we have the following relations:
\begin{align}\nonumber
&B_2-B_1\leq \beta \big|c_{i'}(x_i,x_{-i'})-c_{i'}(y_i,x_{-i'})\big|e^0\leq \beta u,\cr
&C_1-C_2\leq \beta \big|c_{i'}(x_i,x_{-i'})-\Delta_{x_i}-c_{i'}(y_i,x_{-i'})-\Delta_{y_i}\big|e^{\beta u}\leq 3\beta ue^{\beta u}\cr 
& |B-C|\leq |B_1-C_1|+|B_2-C_2|\leq 2\beta ue^{\beta u}.
\end{align} 
Therefore, for any $i'$ we have
\begin{align}\label{eq:total-variation-bound}
2\|\nu^{i'}-\mu^{i'}\|_{TV}\leq \frac{\beta u}{k}e^{\beta u}+\frac{3\beta u}{k}e^{3\beta u}+\frac{2\beta u}{k}e^{2\beta u}\leq \frac{6\beta u}{k}e^{3\beta u}\ \ \Rightarrow \ \|\nu^{i'}-\mu^{i'}\|_{TV}\leq \frac{3\beta u}{k}e^{3\beta u}. 
\end{align}
Next, we bound the mixing time of the Glauber dynamics. For one step of the chain, we have
\begin{align}\label{eq:one-step-chain}
\mathbb{E}[\rho(Z^x_1,Z^y_1)]=1-\frac{1}{n}+\frac{1}{n}\sum_{i'\neq i}\|\mu^{i'}-\nu^{i'}\|_{TV}.
\end{align}
Substituting \eqref{eq:total-variation-bound} into \eqref{eq:one-step-chain} and using the assumption of $\beta\leq \frac{k}{6nu}$, we get 
\begin{align}\nonumber
\mathbb{E}[\rho(Z^x_1,Z^y_1)]\leq 1-\frac{1}{n}+\frac{n-1}{nk}3\beta u e^{3\beta u}\leq 1-\frac{1}{n}+\frac{1}{k}(3\beta u)e^{3\beta u}\leq 1-\frac{1}{n}+\frac{1}{2n}e^{\frac{k}{4n}}\leq 1-\frac{1}{7n}.
\end{align}
Starting from any two arbitrary initial states $x$ and $y$ that differ in $d$ positions, we can reach from $x$ to $y$ using a sequence $x^0=x, x^1,\ldots, x^d=y$ such that every two consecutive allocations differ in exactly one position. As $\rho(\cdot,\cdot)$ is metric over the space of allocations, using triangle inequality, we can write
\begin{align}\nonumber
\mathbb{E}[\rho(Z^{x}_1,Z^{y}_1)]\leq \sum_{k=1}^{d}\mathbb{E}[\rho(Z^{x^{k-1}}_1,Z^{x^{k}}_1)]\leq (1-\frac{1}{7n})d=(1-\frac{1}{7n})\rho(x,y). 
\end{align}
Moreover, using the Markov property of the chains, we can write 
\begin{align*}
\mathbb{E}[\rho(Z^{x}_t,Z^{y}_t)]=\mathbb{E}\big[\mathbb{E}[\rho(Z^x_t,Z^y_t)|Z^x_{t-1},Z^y_{t-1}]\big]=\mathbb{E}[\rho(Z^{Z^x_{t-1}}_1,Z^{Z^y_{t-1}}_1)]\leq (1-\frac{1}{7n})\mathbb{E}[\rho(Z^x_{t-1},Z^y_{t-1})].  
\end{align*}
By using the above inequality recursively, we obtain
\begin{align*}
\mathbb{E}[\rho(Z^{x}_t,Z^{y}_t)]\leq (1-\frac{1}{7n})^t\rho(Z^x_{0},Z^y_{0})=(1-\frac{1}{7n})^t\rho(x,y)\leq ne^{-\frac{t}{7n}}.
\end{align*}
Finally, using Markov's inequality, we can write
\begin{align*}
\mathbb{P}(Z^{x}_t\neq Z^{y}_t)\leq \mathbb{P}(\rho(Z^{x}_t, Z^{y}_t)\ge 1)\leq \mathbb{E}[\rho(Z^{x}_t, Z^{y}_t)]\leq ne^{-\frac{t}{7n}}.
\end{align*}
The above relation, in view of Lemma \ref{lemm:mixing}, shows that the mixing time of the Gluaber dynamics is at most $t_{\rm mix}(\epsilon)=O(n\ln\frac{n}{\epsilon})$.
\end{proof}

As we mentioned earlier, there is a trade-off between the mixing time of the Glauber dynamics and the stationary distribution induced by them. For the Glauber dynamics to concentrate around the allocation profile with a globally minimum potential function, one must choose a large noise parameter $\beta$. However, choosing a large $\beta$ can result in a slow mixing time (as opposed to the fast mixing time guarantee given in Theorem \ref{thm:mixing} for smaller values of $\beta$). However, in practice, the costs of players are mainly determined by their nearby neighbors. Thus, one can leverage the locality of players' cost functions to establish a fast mixing time of the Glauber dynamics for larger values of $\beta$. For instance, if we know that each player's action can affect the cost of at most $d$ nearby players, then the bound for $\beta$ in the previous theorem can be improved to $\beta=O(\frac{1}{u}\ln(\frac{k}{d}))$.

\section{Auction-Based Distributed Computation for Data Placement Problem}

In the previous section, we developed a distributed game-theoretic framework that allows players to update their resources selfishly subject to some noise and eventually be able to compute the global minimizer of the data placement problem. In that formulation, players are the agents, and the actions are the choices of resources. An alternative perspective is to view the resources as players who bid to buy the cache spaces of the agents (viewed as items). That leads us to the following auction-based distributed algorithm for the data placement problem.  

\subsection{An Auction-Based Distributed Algorithm}

Let us consider an auction with $k$ players (resources) and $n$ items. We view the unit cache space of agent $i$ as an item that will be sold to players. We assume that each player $\ell\in [k]$ represents a set of $n$ clients $\{(j,\ell): j\in [n]\}$ and charges a $\beta_j^{\ell}\ge 0$ fee per unit demand to client $(j,\ell)$. This charge is for representing client $(j,\ell)$ in the auction and for connecting that client to resource $\ell$. Moreover, we assume that the items are sold separately using a first-price auction in which players submit their bids for different items. An item is sold to the player with the highest bid (ties are broken arbitrarily), and the winner must pay an amount equal to the highest bid. In addition, we assume that the entrance fee for player $\ell$ to participate in the auction for item $i$ is $f_i^{\ell}$. Next, we specify the bidding strategy for the players.

Let us consider player $\ell$, who charges $\beta_j^{\ell}$ per unit demand to its client $(j,\ell)$. From that amount, player $\ell$ subtracts $c_{ij}$ to account for the cost of connecting $(j,\ell)$ to agent $i$, and therefore includes only a $(\beta_j^{\ell}-c_{ij})^+$ portion of that amount toward bidding for item $i$. Therefore, summing over the total demand of all the clients, player $\ell$ bids $\big(\sum_j w_j^{\ell}(\beta_j^{\ell}-c_{ij})^+-f^{\ell}_i\big)^+$ toward item $i$, where the term $f_i$ is to account for the entrance fee that player $\ell$ has to pay to be able to bid for item $i$. Thus, if player $\ell$ wins a bundle of items $X^{\ell}\subseteq [n]$ in the auction, $\ell$'s utility equals the amount that $\ell$ collects from its clients minus the amount that $\ell$ has to pay to the auctioneer, i.e., 
\begin{align}\label{eq:u_x}
u_{\ell}(\beta^{\ell}, X^{\ell})&=\sum_{j}w_j^{\ell}\beta_{j}^{\ell}-\sum_{i\in X^{\ell}}\big(\sum_j w_j^{\ell}(\beta_j^{\ell}-c_{ij})^+-f^{\ell}_i\big)^+\cr 
&=\sum_{j}w_j^{\ell}\beta_{j}^{\ell}-\sum_{i\in X^{\ell}}\big(\sum_j w_j^{\ell}(\beta_j^{\ell}-c_{ij})^+-f^{\ell}_i\big),
\end{align}
where the second equality holds by individual rationality. Otherwise, if $\sum_j w_j^{\ell}(\beta_j^{\ell}-c_{ij})^+-f^{\ell}_i<0$ for some $i$, there is no incentive for player $\ell$ to enter the auction for item $i$. Therefore, player $\ell$'s goal is to determine a charging strategy $\beta_j^{\ell}$ to maximize its utility function.   

\subsection{Performance Guarantee of the Solution} 

To analyze the performance guarantee of the allocation profile obtained from the above auction, let us again consider the dual program corresponding to the LP relaxation of \eqref{eq:primal} given by
\begin{align}\label{eq:agai_dual}
\max &\sum_{j,\ell}w_j^{\ell}\beta_j^{\ell}-\sum_i \alpha_i\cr
& \sum_j w_j^{\ell}\big(\beta_j^{\ell}-c_{ij}\big)^+-f^{\ell}_i\leq \alpha_i   \ \forall i,\ell,\cr
& \beta^{\ell}_{j}, \alpha_i \ge 0, \ \forall i,j,\ell.
\end{align} 
To satisfy all the constraints in \eqref{eq:agai_dual} while maximizing the dual objective function, we must set $\alpha_i=\max_{\ell}\big(\sum_j w_j^{\ell}\big(\beta_j^{\ell}-c_{ij}\big)^+-f^{\ell}_i\big)^+$, which gives us the following compact form for the dual program:
\begin{align}\label{eq:compact_dual}
\max_{\beta^{\ell}_{j} \ge 0} \Big\{\sum_{j,\ell}w_j^{\ell}\beta_j^{\ell}-\sum_i \max_{\ell}\big(\sum_j w_j^{\ell}\big(\beta_j^{\ell}-c_{ij}\big)^+-f^{\ell}_i\big)^+\Big\}.
\end{align}
Let us use variables $y_i^{\ell}$ to represent the inner maximization in \eqref{eq:compact_dual} as\footnote{In fact, $y_i^{\ell}$ can be thought of as dual variables corresponding to the first set of constraints in the convex program \eqref{eq:agai_dual}, which also coincide with the primal variables $y_i^{\ell}$ in the original linear program \eqref{eq:primal}.} 
\begin{align}\label{eq:minmax_dual}
\max_{\beta^{\ell}_{j} \ge 0}\min_{\substack{\sum_{\ell}y^{\ell}\leq \boldsymbol{1}\\ y^{\ell}\ge \boldsymbol{0} \forall \ell}} \Big\{\sum_{j,\ell}w_j^{\ell}\beta_j^{\ell}-\sum_i\sum_{\ell} y_i^{\ell}\big(\sum_j w_j^{\ell}\big(\beta_j^{\ell}-c_{ij}\big)^+-f^{\ell}_i\big)\Big\}.
\end{align}
Now, for every resource type $\ell$, let us define a utility function as 
\begin{align}\nonumber
u_{\ell}(\beta^{\ell}, y^{\ell})=\sum_{j}w_j^{\ell}\beta_{j}^{\ell}-\sum_{i}y_i^{\ell}\big(\sum_j w_j^{\ell}(\beta_j^{\ell}-c_{ij})^+-f^{\ell}_i\big),
\end{align}
which is the same as the utility function \eqref{eq:u_x} defined for player $\ell$ if we take $X^{\ell}=\{i: y_i^{\ell}=1\}$. Then, the dual program \eqref{eq:minmax_dual} can be written as
\begin{align}\label{eq:minmax_dual_u}
\max_{\beta^{\ell}_{j} \ge 0}\min_{\substack{\sum_{\ell}y^{\ell}\leq \boldsymbol{1}\\ y^{\ell}\ge \boldsymbol{0} \forall \ell}} \sum_{\ell} u_{\ell}(\beta^{\ell}, y^{\ell})=\min_{\substack{\sum_{\ell}y^{\ell}\leq \boldsymbol{1}\\ y^{\ell}\ge \boldsymbol{0} \forall \ell}}\max_{\beta^{\ell}_{j} \ge 0} \sum_{\ell} u_{\ell}(\beta^{\ell}, y^{\ell}),
\end{align}
where the equality holds because each $u_{\ell}(\beta^{\ell}, y^{\ell})$ is concave in $\beta^{\ell}$ and linear (convex) in $y^{\ell}$. In particular, we note that the optimal solution to $y$ is always integral because $u_{\ell}(\beta^{\ell}, y^{\ell})$ is linear with respect to $y^{\ell}$ and the constraints $\{\sum_{\ell}y^{\ell}\leq \boldsymbol{1}, y^{\ell}\ge \boldsymbol{0} \forall \ell\}$ define an integral polytope.

Using the above derivations, it should be clear that if $(\beta,y)$ is an optimal dual solution to \eqref{eq:minmax_dual_u}, then player $\ell$'s strategy to maximize its utility is to charge $w_j^{\ell}\beta^{\ell}_j$ to client $(j,\ell)$, and $\ell$ receives item $i$ if $y^{\ell}_i=1$, in which case $\ell$ has to pay $\alpha_i=\sum_j w_j^{\ell}(\beta_j^{\ell}-c_{ij})^+-f^{\ell}_i$, which is the maximum bid among all the bids for item $i$. Therefore, the allocation profile obtained from the auction when all the players selfishly maximize their utilities is the same as the optimal dual solution to the min-max problem \eqref{eq:minmax_dual_u}.

\begin{theorem}\label{thm:auction}
Consider the data placement problem with zero placement costs $f^{\ell}_i=0,\forall i,\ell$, and let $(\alpha,\beta)$ be the optimal solution to the dual program \eqref{eq:dual}. Then, the resource allocation profile obtained from the auction is an $(\frac{1}{1-\gamma})$-approximation of the data placement problem, where $\gamma=\frac{\|\alpha\|_1}{\|\beta\|_1}\in [0,1)$.
\end{theorem}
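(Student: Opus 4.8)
The plan is to sandwich the cost of the auction's allocation between $(1-\gamma)\|\beta\|_1$ and the integral optimum $\Phi(x^o)$, using LP duality in one direction and a structural characterization of the optimal dual charges $\beta_j^{\ell}$ in the other. Recall from the derivation preceding the theorem that the auction outcome is exactly an optimal dual solution $(\beta,y)$ to the saddle-point program \eqref{eq:minmax_dual_u}, that $y$ is integral, and that it determines the allocation sets $X^{\ell}=\{i:y_i^{\ell}=1\}$. Since $f_i^{\ell}=0$, the cost of this allocation profile is $\sum_{j,\ell}w_j^{\ell}d(j,X^{\ell})$ (and if some agent's cache is left empty by $y$, filling it only decreases distances, so we may take this quantity as an upper bound on the realized allocation cost).

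The key step, which I expect to be the main obstacle, is to show that at the optimum $\beta_j^{\ell}\ge d(j,X^{\ell})$ for every pair $(j,\ell)$. I would fix the optimal $y$ (hence $X$) and examine the inner maximization over $\beta\ge 0$ of $\sum_{\ell}u_{\ell}(\beta^{\ell},y^{\ell})$, which separates across clients. For a single pair $(j,\ell)$ this amounts to maximizing $g(\beta)=\beta-\sum_{i\in X^{\ell}}(\beta-c_{ij})^+$ over $\beta\ge 0$. Sorting the distances $\{c_{ij}:i\in X^{\ell}\}$ shows $g$ is piecewise linear: it increases with slope $1$ up to $\beta=d(j,X^{\ell})$, is flat until the second-nearest distance, and then decreases; hence every maximizer satisfies $\beta_j^{\ell}\ge d(j,X^{\ell})$. (Here one must note that boundedness of the inner maximum forces $X^{\ell}\neq\emptyset$ whenever resource $\ell$ carries positive demand, which holds since the primal \eqref{eq:primal} is feasible with finite optimum, so by strong LP duality the dual optimum is finite.) Summing against the weights then gives
\begin{align}\nonumber
\|\beta\|_1=\sum_{j,\ell}w_j^{\ell}\beta_j^{\ell}\ge \sum_{j,\ell}w_j^{\ell}d(j,X^{\ell})=\Phi(X),
\end{align}
where I write $\Phi(X)$ for the cost of the auction allocation.

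To close, I would invoke weak LP duality: the value of the feasible dual solution $(\alpha,\beta)$ cannot exceed the cost of the integral primal solution $x^o$, i.e. $\|\beta\|_1-\|\alpha\|_1\le \Phi(x^o)$. Substituting $\|\alpha\|_1=\gamma\|\beta\|_1$ turns this into $(1-\gamma)\|\beta\|_1\le \Phi(x^o)$, so $\|\beta\|_1\le \Phi(x^o)/(1-\gamma)$. Chaining this with the inequality $\Phi(X)\le\|\beta\|_1$ from the previous paragraph yields $\Phi(X)\le \frac{1}{1-\gamma}\Phi(x^o)$, the claimed $\big(\frac{1}{1-\gamma}\big)$-approximation. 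Finally, I would remark that $\gamma<1$ is exactly the statement that the dual optimal value $(1-\gamma)\|\beta\|_1$ is strictly positive, which holds whenever the instance has positive optimal cost (the trivial case being degenerate). The only places needing care are the boundedness/nonemptiness argument for $X^{\ell}$ and the reconciliation between the possibly partial assignment produced by $y$ and a genuine profile in $[k]^n$; both are handled by the $f=0$ assumption.
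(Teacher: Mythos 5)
Your proof is correct, and its accounting is genuinely more direct than the paper's. The core structural fact is shared: because the auction's charges $\beta$ are a best response to the integral allocation $y$ (the saddle-point property of \eqref{eq:minmax_dual_u}), they must satisfy $\beta_j^{\ell}\ge d(j,X^{\ell})$; the paper proves precisely this as its complementary-slackness case (3) by a perturbation argument (if $\beta_{j'}^{\ell}<c_{i'j'}$, a slight increase strictly improves $u_{\ell}$), while you read it off the piecewise-linear structure of $\beta\mapsto \beta-\sum_{i\in X^{\ell}}(\beta-c_{ij})^{+}$, which rises with slope $1$ up to $d(j,X^{\ell})$ and never rises afterwards. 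The routes then diverge in how this fact is used. The paper builds an explicit primal solution $x$, verifies every complementary slackness condition except those for $x^{\ell}_{ij}\le y^{\ell}_i$, and invokes the block-slackness Lemma \ref{lemm:block-complement} to obtain $\mbox{Cost}(X)\le \mbox{OPT}+\|\alpha\|_1$, finally dividing by the LP optimum $\mbox{OPT}=(1-\gamma)\|\beta\|_1$. You bypass Lemma \ref{lemm:block-complement} and the $\alpha$-bookkeeping entirely: the pointwise bound gives $\Phi(X)\le\|\beta\|_1$ directly, and weak duality gives $(1-\gamma)\|\beta\|_1\le \Phi(x^o)$. Since $\mbox{OPT}+\|\alpha\|_1=\|\beta\|_1$ by strong duality, the two chains prove the same inequality; yours is more elementary and has a transparent economic reading (each client's charge covers its access distance, so the allocation cost is at most the revenue $\|\beta\|_1$ collected from clients), whereas the paper's primal--dual ledger is the one that extends to nonzero placement costs $f_i^{\ell}$ and to the metric $\frac{3}{1-\gamma}$ variant, where the violated slackness terms must be tracked constraint by constraint rather than absorbed wholesale into $\|\beta\|_1$. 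The two caveats you flag---nonemptiness of $X^{\ell}$ for every positive-demand resource, and attainment of the saddle point so that $\beta$ genuinely maximizes $\sum_{\ell}u_{\ell}(\cdot,y^{\ell})$---are real but benign (the $y$-polytope is compact and the dual optimum finite), and the paper glosses over them as well. Note finally that you compare against the integral optimum $\Phi(x^o)$ via weak duality while the paper compares against the smaller LP optimum via strong duality; the paper's ratio is marginally stronger, but both deliver the theorem as stated.
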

\begin{proof}
We prove the theorem through the following four steps:

{\bf I) Primal feasibility and integrality:} By abuse of notation, let $(\beta,y)$ be the minimal optimal solution to the min-max dual problem \eqref{eq:minmax_dual_u} (i.e., a solution with the least number of nonzero entries), and let $X^{\ell}=\{i: y_i^{\ell}=1\}$. Clearly, $(X^{\ell}, \ell\in [k])$ partitions the set of agents $[n]$. We complement this solution with an integral feasible solution for the primal program by setting $x^{\ell}_{ij}=1$ if $i= \argmin_{i'\in X^{\ell}}c_{i'j}$ (ties are broken arbitrarily), i.e., we connect agent $j$ to the closest agent in $X^{\ell}$ to get access to resource $\ell$. Then, $(x,y)$ defined in this way forms a feasible integral solution to the primal program \eqref{eq:primal}. 

{\bf II) Dual feasibility:} Let $\beta$ be the optimal solution to \eqref{eq:minmax_dual_u} and define $\alpha_i=\max_{\ell}\big(\sum_j w_j^{\ell}(\beta_j^{\ell}-c_{ij})^+-f^{\ell}_i\big)^+$ and $u^{\ell}_{ij}=w_j^{\ell}\big(\beta_j^{\ell}-c_{ij}\big)^+$. Then, from the above arguments, $(\{\alpha_i\},\{w_j^{\ell}\beta_j^{\ell}\},\{u^{\ell}_{ij}\})$ forms an optimal solution to the dual program \eqref{eq:dual}.

{\bf III) Complementary slackness conditions:} Based on Proposition \ref{prop_NP-hard}, we cannot expect all the complementary slackness conditions for the above primal-dual solutions to hold.\footnote{Otherwise, the proposed primal solution will be an optimal integral solution to an NP-hard problem.} However, as we show, the proposed solutions satisfy most of these conditions and still constitute a good suboptimal solution. 
\begin{itemize}
\item[1)] Since $y$ is a minimal optimal integral solution to \eqref{eq:minmax_dual}, $y_i^{\ell}=1$ implies that 
\begin{align}\nonumber
\alpha_i=\max_{\ell'}\big(\sum_j w_j^{\ell'}(\beta_j^{\ell'}-c_{ij})^+-f^{\ell'}_i\big)^+=\big(\sum_j w_j^{\ell}(\beta_j^{\ell}-c_{ij})^+-f^{\ell}_i\big)^+=\sum_j w_j^{\ell}(\beta_j^{\ell}-c_{ij})^+-f^{\ell}_i,
\end{align}
where the last equality holds because if $\sum_j w_j^{\ell}(\beta_j^{\ell}-c_{ij})^+-f^{\ell}_i<0$ we would have $y_i^{\ell}=0$.
\item[2)] Since the primal solution $(X^{\ell}, \ell\in [k])$ partitions $[n]$, and using the definition of $x$ that assigns every agent $j$ to exactly one agent in each $X^{\ell}$, the primal constraints $\sum_{\ell}y_i^{\ell}\leq 1$ and $\sum_{i}x^{\ell}_{ij}\ge 1$ are always satisfied by equality. Therefore, the complementary slackness conditions always hold for these two types of primal constraints.    
\item[3)] As we showed before, for the optimal dual solution we have $u_{ij}^{\ell}=w_j^{\ell}\big(\beta_j^{\ell}-c_{ij}\big)^+ \forall i,j,\ell$. Therefore, to show complementary slackness for the first set of dual constraints in \eqref{eq:dual}, we only need to show that if $x_{i'j'}^{\ell}=1$ for some $i',j',\ell$, then $\big(\beta_{j'}^{\ell}-c_{i'j'}\big)^+=\beta_{j'}^{\ell}-c_{i'j'}$, or, equivalently, $\beta_{j'}^{\ell}\ge c_{i'j'}$. This is also true because if $x^{\ell}_{i'j'}=1$, that means $y_{i'}^{\ell}=1$ (and thus $i'\in X^{\ell}$) and $c_{i'j'}\leq c_{ij'}, \forall i\in X^{\ell}$. Moreover, using case (1) we have $\alpha_{i}=\sum_j w_j^{\ell}(\beta_j^{\ell}-c_{ij})^+-f^{\ell}_{i}\ge 0, \forall i\in X^{\ell}$. Therefore, 
\begin{align}\nonumber
u_{\ell}(\beta^{\ell}, X^{\ell})&=\sum_{j}w_j^{\ell}\beta_{j}^{\ell}-\sum_{i\in X^{\ell}}\big(\sum_j w_j^{\ell}(\beta_j^{\ell}-c_{ij})^+-f^{\ell}_i\big)\cr
&=\sum_{i\in X^{\ell}}f^{\ell}_i+\sum_{j}w_j^{\ell}\beta_{j}^{\ell}-\sum_j \sum_{i\in X^{\ell}} w_j^{\ell}(\beta_j^{\ell}-c_{ij})^+\cr 
&=\sum_{i\in X^{\ell}}f^{\ell}_i+\sum_{j\neq j'}\Big(w_j^{\ell}\beta_{j}^{\ell}-\sum_{i\in X^{\ell}} w_j^{\ell}(\beta_j^{\ell}-c_{ij})^+\Big)+ \Big(w_{j'}^{\ell}\beta_{j'}^{\ell}-\sum_{i\in X^{\ell}} w_{j'}^{\ell}(\beta_{j'}^{\ell}-c_{ij'})^+\Big).
\end{align} 
Suppose, by contrary, $\beta_{j'}^{\ell}<  c_{i'j'}$. Then $\beta_{j'}^{\ell}<c_{ij'} \forall i\in X^{\ell}$ and we have $\sum_{i\in X^{\ell}} w_{j'}^{\ell}(\beta_{j'}^{\ell}-c_{ij'})^+=0$. Therefore, if $\beta_{j'}^{\ell}$ is slightly increased, the last term in the above expression strictly increases.\footnote{Note that since the dual function is given by the sum of utilities defined over separate variables $\beta^{\ell}, \ell\in [k]$, such an increase does not affect other terms in the dual objective function.} This contradicts the fact that $\beta^{\ell}$ corresponds to the optimal dual solution that maximizes $u_{\ell}(\cdot, X^{\ell})$.
\end{itemize}

{\bf IV) Bounding the performance:} Using the properties of the primal-dual solutions that we established above, the only set of constraints that may violate the complementary slackness conditions are the primal constraints $x^{\ell}_{ij}\leq y^{\ell}_{i}$ with the corresponding dual variables $u^{\ell}_{ij}=w_j^{\ell}\big(\beta_j^{\ell}-c_{ij}\big)^+$. Therefore, using Lemma \ref{lemm:block-complement} with the block of constraints $A_1\tilde{x}\ge b_1$ representing constraints $y^{\ell}_{i}-x^{\ell}_{ij}\ge 0$ and corresponding dual variables $u^*_1=(u^{\ell}_{ij})$, the cost of the generated primal solution denoted by $\mbox{Cost}(X)$ equals
\begin{align}\label{eq:cost-bound-alpha-f}
\mbox{Cost}(X)&=\mbox{OPT}+\sum_{i,j,\ell}u^{\ell}_{ij}(y_i^{\ell}-x^{\ell}_{ij})\leq \mbox{OPT}+\sum_{i,j,\ell}u^{\ell}_{ij}(1-0)\cr 
&= \mbox{OPT}+\sum_{\ell}\sum_{i\in X^{\ell}}\sum_{j}w_j^{\ell}\big(\beta_j^{\ell}-c_{ij}\big)^+\cr 
&=\mbox{OPT}+\sum_{\ell}\sum_{i\in X^{\ell}}(\alpha_i+f_i^{\ell})\cr 
&= \mbox{OPT}+\sum_i\alpha_i,
\end{align}
where the third equality holds through use of case 1 of the complementary slackness conditions, and the last equality holds by the assumption $f^{\ell}_i=0,\forall i,\ell$. Dividing both sides by $\mbox{OPT}=\sum_{j,\ell}\beta_j^{\ell}-\sum_i\alpha_i$,\footnote{Here, we are using the original definition of dual variables $\beta_j^{\ell}$ given in \eqref{eq:dual} rather that their scaled version $w_j^{\ell}\beta_j^{\ell}$.} and using the definition of $\gamma$, completes the proof. 
\end{proof}

\begin{remark}
In fact, the optimal dual objective value denoted by $\mbox{OPT}$ equals the social welfare resulting from the auction, i.e., the sum of the players' utilities $\mbox{SW}:=\sum_{\ell}u_{\ell}=\sum_{j,\ell}w_j^{\ell}\beta_j^{\ell}-\sum_i \alpha_i$, while the revenue derived by the auctioneer equals to the sum of all the payments $\mbox{Rev}:=\sum_{i}\alpha_i$. Therefore, another way of interpreting the result of Theorem \eqref{thm:auction} is to say that the approximation guarantee of the allocation profile obtained from the auction is $1+\frac{\mbox{Rev}}{\mbox{SW}}$.   
\end{remark}     

\section{Conclusions}
In this paper, we studied the general non-metric data placement problem from a multiagent game-theoretic perspective and devised distributed computation algorithms for obtaining or approximating its global optimal solutions. The motivation behind this work is that in many real-world applications, such as peer-to-peer systems or ad hoc storage systems, the servers are selfish entities that only want to maximize their own payoffs, yet the goal is to achieve good global performance in terms of content delivery and resource availability. We showed that although the problem is hard to approximate within a logarithmic factor, some natural Glauber dynamics can converge to the optimal solution for a sufficiently large noise parameter. In particular, we established a fast mixing time of the dynamics to their stationary distribution for a certain range of noise parameters. We also provided an auction-based algorithm that can approximate the optimal global solution and can be easily implemented in a distributed manner.

\section*{Appendix I}

\begin{lemma}\label{lemm:mixing}
Let $Z^x_t$ and $Z^y_t$ be copies of a Markov chain with initial states $x$ and $y$ and transition probability matrix $P$. Suppose that for each pair of initial states $x, y\in \mathcal{X}$ there is a coupling $(Z^x_t, Z^y_t)$. Then, $d(t)\leq \max_{x,y}\mathbb{P}(Z^x_t\neq Z^y_t)$, where $d(t)=\sup_{\mu}\|\mu P^t-\pi\|_{TV}$ is the maximum total variation between the distribution of the Markov chain at time $t$ and its stationary distribution $\pi$. In particular, the mixing time of the Markov chains is at most 
\begin{align}\nonumber
t_{\rm mix}(\epsilon):=\min\{t: d(t)<\epsilon\}\leq \min\{t: \max_{x,y}\mathbb{P}(Z^x_t\neq Z^y_t)<\epsilon\}.     
\end{align}
\end{lemma}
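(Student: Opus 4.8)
The plan is to establish the statement via the standard coupling inequality, from which both the bound on $d(t)$ and the mixing-time estimate follow. First I would record the elementary fact that for \emph{any} coupling $(X,Y)$ of two random variables with marginal laws $\mu_X$ and $\mu_Y$ on $\mathcal{X}$, one has $\|\mu_X-\mu_Y\|_{TV}\leq \mathbb{P}(X\neq Y)$. This follows directly from the definition of total variation: for every event $S\subseteq\mathcal{X}$ we have $\mu_X(S)-\mu_Y(S)=\mathbb{P}(X\in S)-\mathbb{P}(Y\in S)\leq \mathbb{P}(X\in S,\,Y\notin S)\leq \mathbb{P}(X\neq Y)$, and taking the supremum over $S$ (which equals $\|\mu_X-\mu_Y\|_{TV}$) gives the claim.

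The second step is to apply this inequality to the two chains at time $t$. Fix an arbitrary starting state $x$, and let the second chain start from the stationary distribution $\pi$; concretely, sample $y\sim\pi$ and then run the given coupling $(Z^x_t,Z^y_t)$. Since $\pi$ is stationary, $Z^y_t$ has law $\pi$ for every $t$, while $Z^x_t$ has law $P^t(x,\cdot)$. The coupling inequality then yields $\|P^t(x,\cdot)-\pi\|_{TV}\leq \mathbb{P}(Z^x_t\neq Z^y_t)$, and averaging over $y\sim\pi$ bounds the right-hand side by $\max_{x,y}\mathbb{P}(Z^x_t\neq Z^y_t)$, so that $\|P^t(x,\cdot)-\pi\|_{TV}\leq \max_{x,y}\mathbb{P}(Z^x_t\neq Z^y_t)$ for every $x$.

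To pass from point masses to a general initial distribution $\mu$, I would use convexity of the total variation norm: writing $\mu P^t-\pi=\sum_x\mu(x)\big(P^t(x,\cdot)-\pi\big)$ and applying the triangle inequality gives $\|\mu P^t-\pi\|_{TV}\leq \sum_x\mu(x)\,\|P^t(x,\cdot)-\pi\|_{TV}\leq \max_{x,y}\mathbb{P}(Z^x_t\neq Z^y_t)$. Taking the supremum over $\mu$ produces the desired bound $d(t)\leq \max_{x,y}\mathbb{P}(Z^x_t\neq Z^y_t)$.

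Finally, the mixing-time inequality is a formal consequence: since $d(t)\leq\max_{x,y}\mathbb{P}(Z^x_t\neq Z^y_t)$, whenever $\max_{x,y}\mathbb{P}(Z^x_t\neq Z^y_t)<\epsilon$ we also have $d(t)<\epsilon$, so $\{t:\max_{x,y}\mathbb{P}(Z^x_t\neq Z^y_t)<\epsilon\}\subseteq\{t:d(t)<\epsilon\}$, and taking minima yields $t_{\rm mix}(\epsilon)\leq \min\{t:\max_{x,y}\mathbb{P}(Z^x_t\neq Z^y_t)<\epsilon\}$. There is no genuinely hard step here; the only points requiring care are the choice to start the second chain from $\pi$ (rather than from a fixed state) so that its marginal stays exactly stationary, and the averaging argument that reduces the resulting bound to the uniform quantity $\max_{x,y}\mathbb{P}(Z^x_t\neq Z^y_t)$.
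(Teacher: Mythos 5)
Your proof is correct. The paper itself does not give an argument at all: it simply invokes Theorem 5.4 and Corollary 5.5 of Levin--Peres (\emph{Markov Chains and Mixing Times}), which encapsulate exactly the coupling bound you derive. Your write-up is therefore a self-contained reconstruction of the cited result, and it is sound at every step: the coupling inequality $\|\mu_X-\mu_Y\|_{TV}\leq \mathbb{P}(X\neq Y)$, the construction of a coupling of $P^t(x,\cdot)$ with $\pi$ by randomizing the second chain's start over $\pi$ (a mixture of couplings is again a coupling, and stationarity keeps the second marginal equal to $\pi$), and the convexity step extending the bound from point masses to arbitrary $\mu$. The only difference from the textbook route is cosmetic: Levin--Peres first bound $\bar{d}(t)=\max_{x,y}\|P^t(x,\cdot)-P^t(y,\cdot)\|_{TV}$ by $\max_{x,y}\mathbb{P}(Z^x_t\neq Z^y_t)$ and then use the separate lemma $d(t)\leq \bar{d}(t)$, whereas you bypass $\bar{d}(t)$ entirely by coupling directly against a stationary-started chain and averaging. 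Both routes use the same two ingredients (the coupling inequality and stationarity of $\pi$), so this is the same proof in substance; yours has the minor advantage of not requiring the auxiliary quantity $\bar{d}(t)$, at the cost of the mixture-of-couplings observation.
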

\begin{proof}
The proof follows from Theorem 5.4 and Corollary 5.5 in \cite{levin2017markov}.
\end{proof}

\begin{lemma}\label{lemm:block-complement}
Consider an LP: $\mbox{OPT}=\min\{cx: Ax\ge b, x\ge 0\}$ and its dual $\max\{ub: uA\leq c, u\ge 0\}$. Suppose $A=[\frac{A_1}{A_2}]$ can be represented using two blocks of constraints $A_1$ and $A_2$. Let $u^*$ be the optimal dual solution, and assume $\tilde{x}$ is a feasible primal solution such that $(\tilde{x},u^*)$ satisfy all the complementary slackness conditions except for the constraints $A_1\tilde{x}\geq b_1$ with corresponding dual variables $u^*_1$, where $b=[\frac{b_1}{b_2}]$. Then $\tilde{x}$ forms an approximate optimal solution for the LP such that $c\tilde{x}= \mbox{OPT}+u_1^*(A_1\tilde{x}-b_1)$.  
\end{lemma}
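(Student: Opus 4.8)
The plan is to evaluate $c\tilde{x}$ directly by invoking the complementary slackness conditions that are assumed to hold, isolating the single violated block as an explicit error term. Recall that for the primal-dual pair $\min\{cx : Ax \ge b, x \ge 0\}$ and $\max\{ub : uA \le c, u \ge 0\}$, the complementary slackness conditions split into two families: the \emph{variable} conditions $\tilde{x}_j\big((u^*A)_j - c_j\big)=0$ (one per primal variable / dual constraint) and the \emph{constraint} conditions $u^*_i\big((A\tilde{x})_i - b_i\big)=0$ (one per primal constraint / dual variable). The hypothesis says that every condition holds except those indexed by the block $A_1\tilde{x}\ge b_1$; I read this as exempting precisely the constraint conditions for the $A_1$ block, so that all variable conditions and all $A_2$-block constraint conditions remain available.

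First I would use the variable conditions to rewrite the objective. For each coordinate $j$ either $\tilde{x}_j=0$ or $(u^*A)_j=c_j$, so $c_j\tilde{x}_j=(u^*A)_j\tilde{x}_j$ holds termwise, and summing gives $c\tilde{x}=u^*A\tilde{x}=u^*(A\tilde{x})$. Next I would split this inner product along the two blocks, $u^*(A\tilde{x})=u^*_1(A_1\tilde{x})+u^*_2(A_2\tilde{x})$, and apply the $A_2$-block constraint conditions $u^*_2\big((A_2\tilde{x})-b_2\big)=0$ to replace $u^*_2(A_2\tilde{x})$ by $u^*_2 b_2$, yielding $c\tilde{x}=u^*_1(A_1\tilde{x})+u^*_2 b_2$. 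Finally I would invoke strong LP duality: since $u^*$ is optimal, $\mbox{OPT}=u^*b=u^*_1 b_1+u^*_2 b_2$, so $u^*_2 b_2=\mbox{OPT}-u^*_1 b_1$. Substituting gives $c\tilde{x}=u^*_1(A_1\tilde{x})+\mbox{OPT}-u^*_1 b_1=\mbox{OPT}+u^*_1(A_1\tilde{x}-b_1)$, which is exactly the claimed identity.

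The computation is entirely algebraic, so I do not expect a genuine obstacle; the only point demanding care is the bookkeeping of which conditions are exempt, since the first step relies on all variable conditions and the second step relies on the $A_2$-block constraint conditions—both of which must be available precisely because only the $A_1$-block constraint conditions are allowed to fail. The single nontrivial ingredient is strong duality, which supplies $\mbox{OPT}=u^*b$; once that is in hand, the error term $u^*_1(A_1\tilde{x}-b_1)$ emerges as the sole residue of the violated block.
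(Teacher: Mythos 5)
Your proof is correct and follows essentially the same route as the paper's: both use the dual-constraint (variable) complementary slackness to get $c\tilde{x}=u^*A\tilde{x}$, the intact $A_2$-block conditions to reduce $u^*_2(A_2\tilde{x})$ to $u^*_2 b_2$, and strong duality $\mbox{OPT}=u^*b$ to produce the error term $u^*_1(A_1\tilde{x}-b_1)$. The only difference is cosmetic bookkeeping (you split the inner product blockwise before substituting, while the paper writes $u^*(A\tilde{x}-b)=u^*_1(A_1\tilde{x}-b_1)$ directly), and your explicit reading of which slackness conditions remain available matches the paper's implicit one.
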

\begin{proof}
Since dual constraints satisfy complementary slackness with respect to $\tilde{x}$, $(u^*A-c)\tilde{x}=0$. Moreover, since all the primal constraints except $A_1\tilde{x}\geq b_1$ satisfy complementary slackness conditions,
\begin{align}\nonumber
u^*(A\tilde{x}-b)=u^*_1(A_1\tilde{x}-b_1)+u^*_2(A_2\tilde{x}-b_2)=u^*_1(A_1\tilde{x}-b_1) \ \Rightarrow \ u^*A\tilde{x}=u^*b+u^*_1(A_1\tilde{x}-b_1).
\end{align}
Thus, we conclude that $\tilde{x}$ is a feasible primal solution whose objective cost equals
\begin{align}\nonumber
c\tilde{x}=u^*A\tilde{x}=u^*b+u^*_1(A_1\tilde{x}-b_1)=\mbox{OPT}+u^*_1(A_1\tilde{x}-b_1),
\end{align}
where the last equality follows by strong duality. 
\end{proof}

\bibliographystyle{IEEEtran}
\bibliography{thesisrefs}

\end{document}